\newtheorem{dfn}{Definition}
\newtheorem{thm}[dfn]{Theorem}
\newtheorem{lem}[dfn]{Lemma}
\newtheorem{prop}[dfn]{Proposition}
\newtheorem{ex}[dfn]{Example}
\newtheorem{rem}[dfn]{Remark}
\theoremstyle{definition}
\newtheorem*{MDZH}{Minimalist Dedekind Zero Hypothesis (MDZH)}
\newtheorem*{MRH}{Modular Root Hypothesis (MRH)}
\newcommand{\bO}{\mathbf{O}}
\newcommand{\fa}{\mathfrak{a}}
\newcommand{\fp}{\mathfrak{p}}
\newcommand{\eps}{\varepsilon}
\newcommand{\pp}{\mathbf{P}}
\newcommand{\vp}{\mathbf{VP}}
\newcommand{\vnp}{\mathbf{VNP}}
\newcommand{\np}{\mathbf{NP}}
\newcommand{\conp}{\mathbf{coNP}}
\newcommand{\am}{\mathbf{AM}}
\newcommand{\bpp}{\mathbf{BPP}}
\newcommand{\pspa}{\mathbf{PSPACE}}
\newcommand{\dia}{$\diamond$}
\newcommand{\C}{\mathbb{C}}
\newcommand{\F}{\mathbb{F}}
\newcommand{\N}{\mathbb{N}}
\newcommand{\Q}{\mathbb{Q}}
\newcommand{\R}{\mathbb{R}}
\newcommand{\Z}{\mathbb{Z}}
\newcommand{\cE}{\mathcal{E}}
\newcommand{\cO}{\mathcal{O}}
\newcommand{\lcm}{\operatorname{lcm}}
\newcommand{\la}{\langle}
\newcommand{\ra}{\rangle}
\begin{document}
\title{Dedekind Zeta Zeroes and Faster Complex Dimension Computation}

\author{J.\ Maurice Rojas}
\email{rojas@math.tamu.edu} 
\address{Texas A\&{}M University, TAMU 3368, College Station, 
TX 77843-3368} 
\thanks{Partially supported by NSF grants DMS-1460766
and CCF-1409020. }

\author{Yuyu Zhu}
\email{zhuyuyu@math.tamu.edu}

\begin{abstract}
Thanks to earlier work of Koiran, it is known that the truth of the 
Generalized Riemann Hypothesis (GRH) implies that the dimension 
of algebraic sets over the complex numbers can be determined 
within the polynomial-hierarchy. The truth of GRH thus provides a 
direct connection between a concrete algebraic geometry problem and 
the $\pp$ vs.\ $\np$ Problem, in a radically different direction 
from the geometric complexity theory approach to $\vp$ vs.\ $\vnp$. We 
explore more plausible hypotheses yielding the same speed-up. One minimalist 
hypothesis we derive involves improving the error term (as a function 
of the degree, coefficient height, and $x$) on 
the fraction of primes $p\!\leq\!x$ for which a univariate polynomial 
has roots mod $p$. A second minimalist hypothesis involves sharpening 
current zero-free regions for Dedekind zeta functions. Both our 
hypotheses allow failures of GRH but still enable complex dimension 
computation in the polynomial hierarchy.  
\end{abstract}

\keywords{polynomial hierarchy, Dedekind zeta, Riemann hypothesis, 
random primes, height bounds, rational univariate reduction, 
nullstellensatz, complex dimension} 

\maketitle

\section{Introduction} 
The subtlety of computational complexity in algebraic 
geometry persists in some of its most basic problems. For instance, 
	let {\tt FEAS$_\C$} denote the
problem of deciding whether an input polynomial system\\ 
\mbox{}\hfill  
$F\!\in\!\bigcup\limits_{k,n\in \N} 
(\Z[x_1,\ldots,x_n])^k$\hfill\mbox{}\\ 
has a complex root. While the 
implication $\text{{\tt FEAS}}_\C\!\in\!\pp\Longrightarrow 
\pp\!=\!\np$ has long been known, the inverse implication  
$\text{{\tt FEAS}}_{\C} \not \in \pp \Longrightarrow \pp\!\neq\!\np$ 
remains unknown.  
Proving the implication $\text{{\tt FEAS}}_{\C}
\not \in \pp \Longrightarrow \pp\!\neq\!\np$ 
would shed new 
light on the $\pp$ vs.\ $\np$ Problem, and may be easier 
than attempting to prove 
the complexity lower bound $\text{{\tt FEAS}}_\C 
\not \in \pp$ (whose truth is still unknown). 

Detecting complex roots is the $D\!=\!0$ case of the following more 
general problem: 
\begin{quote} 
{\tt DIM}$_\C$: Given $\displaystyle{(D,F)\!\in\!(\N\cup\{\bO\})\times 
\bigcup_{k,n\in \Z} (\Z[x_1,\ldots,x_n])^k}$,  
decide whether the complex zero set of 
$F$ has dimension at least $D$. \dia 
\end{quote}  
In particular, {\tt FEAS}$_\C\not\in\pp\Longrightarrow${\tt DIM}$_\C
\not\in\pp$. 
Recall the containment of complexity classes 
$\pp\!\subseteq\!\np\!\subseteq\!\am\!\subseteq\!\pp^{\np^\np}\!
\subseteq\!\pspa$,  
and that the question $\pp\!\stackrel{?}{=}\!\pspa$ remains 
open \cite{papa,cxity}. That $\pp\!=\!\np$ implies the 
{\em collapse} $\pp\!=\!\np\!=\!\conp\!=\!\am\!=\!\pp^{\np^\np}$ 
is a basic fact from complexity theory (see, e.g., \cite[Thm.\ 5.4, pp.\ 
97--98]{cxity}). (We briefly review these complexity classes in the next 
section.) 
{\tt DIM}$_\C$ (and thus {\tt FEAS}$_\C$) has been known to lie in $\pspa$ at 
least since \cite{giustiheintz}, and the 
underlying algorithms have important precursors in 
\cite{chigo,Can88,ierardi,renegar}. 

But in 1996, Koiran \cite{Koi96} proved 
that the truth of the  
Generalized Riemann Hypothesis (GRH) implies that {\tt DIM}$_\C\!\in\!
\am$. In particular, one thus easily obtains that the truth of GRH yields 
the implication {\tt DIM}$_\C\!\not\in\!\pp \Longrightarrow \pp\!\neq\!\np$. 
Thus, assuming GRH, if one can prove that computing the dimension of complex 
algebraic sets is sufficiently hard, then one can solve the $\pp$ vs.\ $\np$ 
Problem. An interesting application of Koiran's result is that it is a key 
step in the proof that the truth of GRH 
implies that knottedness (of a curve defined by a knot diagram) 
can be decided in $\np$ \cite{kuperberg}. 

Here, we prove that {\tt DIM}$_\C\!\in\!\pp^{\np^\np}$ under either of 
two new 
hypotheses: See Theorem \ref{thm:main} below. 
Each of our hypotheses is implied by GRH, {\em but can still hold true under 
certain failures of GRH}. 
\begin{rem} 
To the best of our knowledge, the only other work on improving Koiran's 
conditional speed-up has focussed on proving unconditional speed-ups (from 
$\pspa$ to $\pp^{\np^\np}$ or $\np$) for special 
families of polynomial systems. See, e.g., \cite{cheng,rojas}. For 
instance, thanks to the first paper, the special case of {\tt FEAS}$_\C$ 
involving inputs of the form $(f,x^D_1-1)$ with $(D,f)\!\in\!\N\times \Z[x_1]$ 
is $\np$-complete. \dia 
\end{rem} 

To state our first and most plausible hypothesis, 
let $f\!\in\!\Z[x_1]$ be an irreducible 
polynomial of degree $d$ with coefficients of absolute value at most 
$2^\sigma$ for some 
$\sigma\!\in\!\N$. Let $\pi_f(x)$ denote the number of primes $p$ 
for which the mod $p$ reduction of $f$ 
has a root mod $p$ and $p\!\leq\!x$. 
Note that $\pi_{0}(x)$ is thus simply the number
of primes $p\!\leq\!x$, i.e., the well-known prime-counting function 
$\pi(x)$. In what follows, {\em all} $O$- and $\Omega$- constants 
are absolute (i.e., they really are constants) and effectively computable. 

\begin{MRH}{\em                                         
There is a constant $C\!>\!1$ such that for any $f$ as above we have\\ 
\mbox{}\hfill  
$\pi_f(x) \geq x\left(\frac{1}{d\log x} \ 
  - \ \frac{1}{\exp\!\left(\frac{(\log x)^{1/C}}{(\log(d^2\sigma+d^3))^C}
   \right)}\right)$.\hfill\mbox{}\\ 
for $x=\Omega\!\left(\exp\!\left(4(\log(d^2\sigma+d^3))^{2+C^2}\right)
\right)$.}
\end{MRH}

\noindent 
That $\pi_f(x)$ is asymptotic to $\frac{x}{s_f\log x}$ for 
some positive integer $s_f\!\leq\!d$ goes back to classical 
work of Frobenius \cite{frobenius} (see also \cite{lenste} for 
an excellent historical discussion). More to the point, 
as we'll see in our proofs, the behavior of $\pi_f$ is intimately related 
to the distribution of prime ideals in the ring $\cO_K$ of algebraic integers 
in the number field $K\!:=\!\Q[x_1]/\langle f\rangle$, and the error term is 
where all the difficulty enters: MRH is not currently 
known to be true. However, MRH can still hold even if GRH fails (see Theorem 
\ref{thm:main} below). In particular, 
while the truth of GRH implies that the $1/$exponential term in our lower 
bound above can be decreased to $O\!\left(\frac{d\log(\Delta x)}{\sqrt{x}}
\right)$ in absolute value, we will see later that our looser bound 
still suffices for our algorithmic purposes. (Note that 
$\frac{1}{\sqrt{x}}\!=\!o\!\left(\frac{1}{\exp((\log x)^{1/C})}\right)$ for any 
$C\!>\!1$.)  

Our second hypothesis is a statement intermediate 
between MRH and GRH in plausibility. Recall that the 
{\em Dedekind zeta function}, $\zeta_K(s)$, is the analytic continuation 
(to $\C\setminus\{1\}$) of the  
function $\sum\limits_{\fa} \frac{1}{(N\fa)^s}$, where the summation is taken 
over all integral ideals $\fa$ of $\cO_K$ and $N\fa$ is the norm of $\fa$ 
\cite{ik}.  (So $\zeta_\Q(s)$ is the classical {\em Riemann zeta function 
$\zeta(s)$}, defined from the sum $\sum\limits^\infty_{n=1} \frac{1}{n^s}$.)   
We call a root $\rho = \beta+\gamma\sqrt{-1}$ (with $\beta,\gamma\!\in\!\R$) 
of $\zeta_K$ a \textbf{non-trivial zero} if and only if $0<\beta<1$. GRH is 
then following statement:
\begin{quote}
(GRH) All the non-trivial zeroes $\rho = \beta+\gamma\sqrt{-1}$ of 
$\zeta_K$ lie on 
the vertical line defined by $\beta = 1/2$. \dia 
\end{quote} 

\noindent 
Let $\Delta$ denote the absolute value of the discriminant of $K$. 
Our second hypothesis allows {\em infinitely many} zeroes 
off the line $\beta\!=\frac{1}{2}$, provided they don't approach the 
boundary of the critical region too quickly (as a function of 
$(d,\Delta)$). We review the number theory we 
need in the next section. 
\begin{MDZH}{\em 
There is a constant $C\!>\!4$ such that for any number field $K$,  
the Dedekind zeta function $\zeta_K(s)$ has {\em no} zeroes 
$\rho = \beta+\gamma\sqrt{-1}$ in the region
\begin{align*}
|\gamma| &\geq (1+4\log \Delta)^{-1}\\
\beta &\geq 1 - (\log(d\log(3\Delta))^{C}
\log(|\gamma|+2))^{-1}
\end{align*}
and {\em no} real zeroes in the open interval $\left(
1-\log(d\log(3\Delta))^{-C},1\right)$.}  
\end{MDZH} 

The main motivation for our two preceding hypotheses is the 
following chain of implications, which form our main result. 
\begin{thm} 
\label{thm:main} 
The following three implications hold: \\ 
(1) GRH$\Longrightarrow$MDZH \ \ , \ \ 
(2) MDZH$\Longrightarrow$MRH \ \ , \ \  
(3) MRH$\Longrightarrow${\tt DIM}$_\C\in\!\pp^{\np^\np}$.  
\end{thm}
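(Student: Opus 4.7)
My plan is to prove the three implications in sequence: (1) is a direct containment check on zero-free regions, (2) is an effective prime-ideal-theorem argument that converts the MDZH zero-free region into the MRH error term, and (3) extends Koiran's GRH-powered $\am$ algorithm to $\pp^{\np^\np}$ by replacing Arthur--Merlin verification with a thresholded $\np$-counting step.

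For (1), observe that under GRH every non-trivial zero $\rho = \beta + \gamma\sqrt{-1}$ of $\zeta_K$ satisfies $\beta = 1/2$, while MDZH forbids $\beta \geq 1 - (\log(d\log(3\Delta))^{C}\log(|\gamma|+2))^{-1}$ (and real zeros in $(1-\log(d\log(3\Delta))^{-C},1)$). For $C>4$ and $d,\Delta\geq 2$ both thresholds exceed $1/2$, so the MDZH region is empty under GRH after a routine estimate on $\log(d\log(3\Delta))^{C}$.

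For (2), the bridge is $\pi_f(x) = \pi_K^{(1)}(x) + O(\log\disc(f))$, where $K = \Q[x_1]/\la f\ra$ and $\pi_K^{(1)}(x)$ counts prime ideals of $\cO_K$ of norm at most $x$ and residue degree one; this follows from Kummer--Dedekind applied to the splitting of rational primes, the error absorbing the $O(\log\disc(f))$ ramified primes. The explicit formula writes
\[
\pi_K^{(1)}(x) = \mathrm{Li}(x) - \sum_\rho \mathrm{Li}(x^\rho) + (\text{lower-order terms}),
\]
with the sum taken over the non-trivial zeros $\rho$ of $\zeta_K$. I plan to carry out the standard contour-integration estimate of $\sum_\rho \mathrm{Li}(x^\rho)$ in the style of Lagarias--Odlyzko, using the MDZH zero-free region to bound $|x^\rho|$ inside it and the Riemann--von Mangoldt formula to count zeros in horizontal strips. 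This yields an error of order $x\exp(-(\log x)^{1/C}/(\log(d\log(3\Delta)))^{C})$, and a Minkowski/Mahler-measure bound $\log\Delta = O(d\sigma + d\log d)$ then lets me replace $\log(d\log(3\Delta))$ by $\log(d^{2}\sigma+d^{3})$ after mildly enlarging $C$. The MRH inequality follows on the stated range of $x$.

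For (3), follow Koiran's chain of reductions: (a) reduce DIM$_\C$ to polynomially many instances of FEAS$_\C$ via generic affine slices of codimension $D$ and polynomially bounded integer height, using effective Bertini to guarantee that $\dim V(F)\geq D$ iff the sliced system is feasible for a majority of such slices; (b) reduce FEAS$_\C(F)$ to an implicit univariate question via rational univariate reduction, producing $f\in\Z[x_1]$ of degree $d$ and coefficient bit-size $\sigma$ polynomial in the input length with $V(F)$ being $\C$-feasible iff $f$ is non-constant---the coefficients of $f$ may be too large to write down, but for any prime $p$ and $y\in\F_p$ deciding $f(y)\equiv 0\pmod p$ reduces in polynomial time to feasibility of $F\cap\{x_1=y\}$ modulo $p$, which is an $\np$-question; (c) choose $x = \exp(\mathrm{poly}(d,\sigma))$ just above the MRH threshold, so that $\log x$ stays polynomial; infeasibility forces $f$ to be a nonzero constant with $\pi_f(x)\leq O(\sigma)$, while feasibility plus MRH gives $\pi_f(x)\geq x/(2d\log x)\gg\sigma$. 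Deciding the resulting threshold $|S|\geq \sigma+1$ for $S = \{p\leq x : p\text{ prime},\ \exists y,\ f(y)\equiv 0\pmod p\}$ is witnessed in $\np$ by guessing $\sigma+1$ good primes together with their verification data (witnesses for the mod-$p$ feasibility, of total polynomial bit-length), so FEAS$_\C\in\np$; combining with the $\pp$-style Bertini majority in (a) then places DIM$_\C$ in $\pp^{\np}\subseteq \pp^{\np^\np}$.

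The main obstacle is (3): Koiran's $\am$ argument works because GRH's $O(\sqrt{x})$ error gives good-prime density essentially $1/d$ for $x$ polynomial in the input, enabling single-shot Arthur--Merlin amplification. MRH only delivers sub-exponential error, forcing $x$ quasi-polynomial and the density down to $1/(d\log x)=1/\mathrm{poly}$---too sparse for $\am$ but still robust enough for a $\pp^{\np}$ thresholded counter. The delicate bookkeeping is to verify that the reductions in (3a) and (3b) preserve the polynomial bounds on $(d,\sigma)$ required by MRH, that $\log x$ remains polynomial in the input length, and that the yes/no gap of order $x/(d\log x) - O(\sigma)$ survives every error introduced by the threshold test.
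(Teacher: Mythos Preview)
Your treatment of (1) and (2) is essentially the paper's: (1) is the trivial containment of zero-free regions, and (2) is the Lagarias--Odlyzko explicit-formula estimate for $\psi_K$ (equivalently $\pi_K$) under the MDZH region, followed by the Kummer--Dedekind correspondence between degree-$1$ primes of $K$ and linear factors of $f$ mod $p$, plus the discriminant height bound. One small slip: $\pi_f(x)$ is not $\pi_K^{(1)}(x)+O(\log\disc f)$ but rather $\pi_f(x)\geq \frac{1}{d}\sum_{p\leq x}W(p)$ with $\sum_{p\leq x}W(p)=\pi_K^{(1)}(x)+O(d\sqrt{x}+\log\Delta)$; this is what actually yields the $1/d$ factor in MRH.

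Part (3), however, has a genuine gap. Your claim that the rational univariate reduction produces $f\in\Z[x_1]$ of degree $d$ and coefficient bit-size $\sigma$ \emph{polynomial in the input length} is false: by Lemma~\ref{lemma:1}, $\deg U_F\leq D^n$ and the coefficient height of $U_F$ is $O(D^n[\sigma(F)+n\log D])$, both of which are \emph{exponential} in the sparse bit-size of $F$ (indeed $n\log D\leq\sigma(F)$ only gives $D^n\leq 2^{\sigma(F)}$). Consequently the ``infeasible'' upper bound on the number of good primes is not $O(\sigma)$ with $\sigma$ polynomial; the correct bound is the Nullstellensatz constant $A_F$ of Theorem~\ref{thm:noroot}, and $A_F$ is exponential in $\sigma(F)$ (only $\log A_F$ is polynomial). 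So you cannot place ${\tt FEAS}_\C$ in $\np$ by guessing $A_F+1$ good primes: that certificate has exponential length. Your proposed reduction ``$f(y)\equiv 0\pmod p$ iff $F\cap\{x_1=y\}$ is feasible mod $p$'' is also not what the RUR provides; the paper only uses (and only needs) the one-sided implication $\pi_F(x)\geq\pi_f(x)$ for unramified $p$.

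The paper closes this gap by a different mechanism: it never writes down $f$ and never guesses primes. It establishes a \emph{multiplicative} separation---$\pi_F(t(F))\leq A_F$ if $F$ is infeasible, versus $\pi_F(t(F))\geq 7A_F$ if $F$ is feasible, with $\log t(F)$ polynomial (Proposition~\ref{prop:3})---and then invokes Stockmeyer's factor-$2$ approximate counting (Theorem~\ref{thm:stock}) of the $\#\pp$ quantity $M=\#\{p\leq t(F): F\bmod p \text{ has an }\F_p\text{-root}\}$ to decide $M>3A_F$. That is precisely why the conclusion is ${\tt FEAS}_\C\in\pp^{\np^\np}$ rather than $\np$: the yes/no counts are both exponentially large, separated only by a constant factor, and Stockmeyer is exactly the tool that distinguishes such counts inside the polynomial hierarchy.
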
 

\noindent 
We prove Theorem \ref{thm:main} in Section \ref{sec:proof}. We briefly 
review some complexity theoretic notation in 
Section \ref{sub:cxity}, and in Section \ref{sub:alg} we review some 
algebraic tools we need to relate polynomial systems to number fields. 
It is important to recall that,  
like Koiran's original approach in \cite{Koi96}, our algorithm is completely 
distinct from numerical continuation, or the usual computational algebra 
techniques like Gr\"{o}bner bases, resultants, or non-Archimedean Newton 
Iteration. In particular, we use random sampling to study the density of 
primes $p$ for which the mod $p$ reduction of a polynomial system has 
roots over the finite field $\F_p$. 

\section{Technical Background}
Our approach begins by naturally associating a number field $K$ to a  
polynomial system $F\!=\!(f_1,\ldots,f_k)\!\in\!\Z[x_1,\ldots,x_n]$. 
Then, the distribution of prime ideals of $\cO_K$ 
forces the existence of complex roots for $F$ to imply (unconditionally) 
the existence of roots over $\F_p$ for a {\em positive density} of mod $p$ 
reductions of $F$. Conversely, if $F$ has no complex roots, then there 
are (unconditionally) only finitely many primes $p$ such that the mod $p$ 
reduction of $F$ has a root over $\F_p$. These observations, along with a 
clever random-sampling trick that formed the first algorithm for computing 
complex dimension in the polynomial-hierarchy (assuming GRH), go back to 
Koiran \cite{Koi96}. Our key contribution is thus isolating the minimal 
number-theoretic hypotheses (``strictly'' more plausible than GRH) 
sufficient to make a positive density of primes 
observable via efficient random sampling. 

\subsection{Some Complexity Theory}
\label{sub:cxity} 
Our underlying computational model will be the classical Turing machine, 
which, informally, can be assumed to be anyone's laptop computer,  
augmented with infinite memory and a flawless operating system. Our 
notion of input size is the following: 
\begin{dfn}
The \textbf{bit-size} (or \textbf{sparse size}) of a polynomial system 
$F:= (f_1,\cdots,f_k)\!\in\!\Z[x_1,\ldots,x_n]$, is defined to be the 
total number of bits in 
the binary expansions of all the coefficients and exponents of the monomial 
term expansions of all the $f_i$. 
\end{dfn}

Recall that an \textbf{oracle in A} is a special machine that runs, in unit 
time, an algorithm with complexity in A. Our complexity classes can then be 
summarized as follows (and found properly defined in \cite{papa,cxity}).\\
\begin{itemize}
\item[\textbf{P}]{The family of decision problems which can be done within time polynomial in the input size.} 
\item[\textbf{NP}]{The family of decision problems where a ``\texttt{yes}'' answer can be \textbf{certified} within time polynomial in the input size.} 
\item[\textbf{\#P}]{The family of enumerative problems $\mathcal{P}$ admitting an \textbf{NP} problem $\mathcal{Q}$ such that the answer to every instance of $\mathcal{P}$ is exactly the number of ``\texttt{yes}'' instances of $\mathcal{Q}$.} 
\item[$\mathbf{NP^{NP}}$]{The family of decision problems polynomial-time  
equivalent to deciding quantified Boolean sentences of the form\\ 
$\exists x_1 \cdots \exists x_\ell \forall y_1 \cdots \forall y_m \ \ 
B(x_1,\ldots,x_\ell,y_1,\ldots,y_m)$.}  
\item[$\mathbf{P^{NP^{NP}}}$]{The family of decision problems solvable within time polynomial in the input size, with as many calls to an $\mathbf{NP^{NP}}$-oracle as allowed by the time bound.} 
\item[\textbf{PSPACE}]{The family of decision problems solvable within time polynomial in the input size, provided a number of processors exponential in the input size is allowed.} 
\end{itemize}

Finally, let us recall the following important approximation result of 
Stockmeyer. 
\begin{thm}[\cite{Sto85}]
\label{thm:stock}
Any enumerative problem $\cE$ in \textbf{\#P} admits an algorithm in 
$\mathbf{P^{NP^{NP}}}$ which decides if the output of an instance of 
$\cE$ exceeds an input $M\!\in\!\N$ by a factor of $2$. 
\end{thm}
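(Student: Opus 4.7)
\medskip
\noindent\textbf{Proof proposal.} The plan is to encode the decision \emph{``is $f_\cE(x)\geq 2M$?''} as a polynomial-hierarchy predicate via pairwise-independent hashing, and then to show that this predicate lies at the $\Sigma_2$ level, i.e., is decidable with a single $\np^{\np}$ oracle query.

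First I would use the definition of $\#\pp$ to write $f_\cE(x) = |A(x)|$ where
\[
A(x) \;=\; \{\,y\in \{0,1\}^{p(|x|)} \,:\, N(x,y)\text{ accepts}\,\}
\]
for some deterministic polynomial-time machine $N$ and polynomial $p$, so that membership $y\in A(x)$ is decidable in polynomial time in $(x,y)$. I would then fix a pairwise-independent hash family $\cH_k$ of maps $\{0,1\}^{p(|x|)}\to \{0,1\}^k$, set $k = \lceil \log M\rceil + O(1)$, and analyze the random variable $X_h := |\{y\in A(x):h(y)=0^k\}|$. A routine Chebyshev calculation on the indicators $\mathbf{1}[h(y)=0^k]$ for $y\in A(x)$ gives $\mathbb{E}[X_h] = |A(x)|/2^k$ and $\mathrm{Var}[X_h]\leq \mathbb{E}[X_h]$, and a careful choice of $k$ then produces the gap: $|A(x)|\geq 2M$ implies $\Pr_h[X_h\geq 1]\geq 3/4$, while $|A(x)|\leq M$ implies $\Pr_h[X_h\geq 1]\leq 1/2$.

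Next I would observe that the predicate ``$\Pr_h[X_h\geq 1]\geq 3/4$'' is exactly an \am{} statement: Arthur sends a uniformly random $h$, Merlin replies with a purported $y\in A(x)$ satisfying $h(y)=0^k$, and Arthur verifies $y\in A(x)$ in deterministic polynomial time. Parallel repetition plus a majority-vote amplification (Chernoff) widens the gap to $1-2^{-|x|}$ against $2^{-|x|}$, so the decision lies in $\am$. Invoking the classical inclusion $\am\subseteq \np^{\np}$, the decision reduces to a single query to an $\np^{\np}$ oracle, and wrapping this in a polynomial-time Turing machine that parses $(x,M)$ and constructs that query yields the desired $\pp^{\np^{\np}}$ algorithm.

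The main technical obstacle is calibrating $k$ and the amplification parameters so that the raw Chebyshev gap of roughly $3/4$ vs.\ $1/2$ sharpens precisely to the factor-of-$2$ threshold in the statement, while still keeping completeness and soundness well separated on both sides. This is a standard but fussy tuning exercise; once it is done, the remaining ingredients---existence of polynomial-time computable pairwise-independent hash families, the Chernoff amplification, and the inclusion $\am\subseteq \np^{\np}$---are entirely off-the-shelf.
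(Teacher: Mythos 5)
The paper offers no proof of this statement: it is quoted verbatim as Stockmeyer's approximate-counting theorem from \cite{Sto85}, so there is no internal argument to compare against. Your reconstruction via pairwise-independent hashing is the right family of ideas, but it follows the Goldwasser--Sipser set-lower-bound protocol (randomized verifier, \am{}, then collapse into the hierarchy), whereas Stockmeyer's own argument uses the ``coding lemma'': $|A(x)|\leq 2^k$ iff there exist polynomially many hash functions $h_1,\dots,h_m$ into $\{0,1\}^{k+O(1)}$ such that every $y\in A(x)$ is isolated by some $h_i$ from the rest of $A(x)$. That predicate is syntactically $\exists\forall$ with a polynomial-time matrix, so it sits directly in $\np^{\np}$ and a binary search over $k$ gives the $\pp^{\np^{\np}}$ algorithm without ever passing through \am{}. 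Both routes are legitimate; the coding-lemma route is cleaner here because it avoids the issues below.

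Two points in your sketch need repair. First, a single pairwise-independent hash cannot deliver your claimed $3/4$ vs.\ $1/2$ gap from a factor-$2$ separation in $|A(x)|$: Chebyshev gives $\Pr[X_h\geq 1]\geq 3/4$ only when $\mathbb{E}[X_h]\geq 4$, i.e.\ $2^k\leq M/2$, while the soundness bound $\Pr[X_h\geq 1]\leq \mathbb{E}[X_h]\leq 1/2$ requires $2^k\geq 2M$; these are incompatible, so no choice of $k$ alone works. The standard fix is to first replace $A(x)$ by the Cartesian power $A(x)^t$ (turning the factor-$2$ gap into a factor-$2^t$ gap) or to use a threshold vote over many independent hash functions; you gesture at ``fussy tuning,'' but the missing ingredient is this amplification of the set, not a calibration of $k$. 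Second, the inclusion $\am\subseteq\np^{\np}$ that you invoke is not a known theorem; the known inclusion is $\am\subseteq\Pi_2^{\pp}=\mathbf{coNP}^{\np}$. Your conclusion survives because a deterministic polynomial-time base machine cannot tell a $\Sigma_2$ oracle from a $\Pi_2$ oracle ($\pp^{\np^{\np}}=\pp^{\mathbf{coNP}^{\np}}$), but as written the step is wrong and should be routed through $\Pi_2^{\pp}$. With those two corrections, and with the observation that the statement is inherently a promise (gap) problem so the behavior for $M<|A(x)|<2M$ is unconstrained, your argument goes through.
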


\noindent 
One can thus, in the preceding decisional sense, do constant-factor 
approximation of functions in $\#\pp$ within the polynomial-hierarchy. 

\subsection{Rational Univariate Reduction and an Arithmetic Nullstellensatz}
\label{sub:alg}
In this section, we develop tools that will reduce the feasibility of 
polynomial systems to algebra involving ``large'' univariate polynomials. 
The resulting quantitative bounds are essential in constructing our algorithm. 

Our first lemma is a slight refinement of earlier work on rational univariate 
reduction (see, e.g., \cite{Can88,rojas0,maillot}), so we leave its proof 
for the full version of this paper. 
\begin{lem} 
\label{lemma:1} 
	Let $F$ be our polynomial system and $Z_F$ denote the zero set of $F$ 
in $\C^n$. Then there are univariate polynomials $u_1,\cdots,u_n, U_F\in 
\Z[t]$ and positive integers $r_1,\cdots,r_n$ such that
	\begin{enumerate}
		\item The number of irreducible components of $Z_F$ is bounded above by $\deg U_F$, and $\deg(u_i)\leq \deg(U_F) \leq D^n$ for all $1\leq i\leq n$.
		\item For any root $\theta$ of $U_F$, we have 
$F\!\left(\frac{u_1(\theta)}{r_1},\cdots,\frac{u_n(\theta)}{r_n}\right) = 0$, 
and every irreducible component of $Z_F$ contains at least one point that can be expressed in this way.
		\item The coefficients of $U_F$ have absolute value 
no greater than $2^{O(D^n[\sigma(F)+n\log D])}$. \qed 
	\end{enumerate}

\end{lem}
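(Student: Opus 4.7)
The plan is to mirror the classical rational univariate reduction (RUR) construction --- going back to Kronecker and refined by Canny \cite{Can88}, Rojas \cite{rojas0}, and Maillot \cite{maillot} --- while tracking bit-sizes more carefully. First I would reduce to a zero-dimensional situation: for each irreducible component $C$ of $Z_F$ of dimension $d_C$, slice with $d_C$ generic hyperplanes whose coefficients are small integers. Existence of such hyperplanes follows from a Schwartz--Zippel--style union bound, since the ``bad locus'' (hyperplanes failing to cut $C$ transversally into isolated points) is contained in a proper subvariety of the coefficient space. Performed uniformly across all components, this yields a zero-dimensional subscheme $Z'\!\subset\!\C^n$ containing at least one representative point from every irreducible component of $Z_F$, with $\#Z'\leq D^n$ by the B\'ezout inequality. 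This immediately gives the degree bound in item (1).

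Second, I would pick a primitive element $\ell(x)=a_1x_1+\cdots+a_nx_n$ that separates the points of $Z'$; again such $\ell$ with small integer $a_i$ exists because the ``non-separating'' forms lie in a proper subvariety of $a$-space, and one can search over a lattice of appropriate diameter. Define $U_F(t)$ as the polynomial whose roots are the values $\ell(p)$ for $p\in Z'$, with multiplicities matching the local intersection numbers --- equivalently, as the elimination ideal generator obtained by computing a resultant of $F$ together with the slicing hyperplanes and the equation $t-\ell(x)=0$. The coordinates $u_i(t)$ are then produced by the classical Kronecker/trace identity
\[
\frac{u_i(t)}{r_i\, U_F'(t)} \ = \ \sum_{p\in Z'}\frac{p_i}{t-\ell(p)},
\]
with $r_i\!\in\!\Z_{>0}$ a common denominator chosen to clear $u_i$ to integer coefficients. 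Items (1) and (2) of the lemma are then formal consequences of this construction: each root $\theta$ of $U_F$ equals $\ell(p)$ for some $p\in Z'$, and the interpolation formula forces $u_i(\theta)/r_i\!=\!p_i$, so $F(u_1(\theta)/r_1,\ldots,u_n(\theta)/r_n)\!=\!F(p)\!=\!0$.

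The main obstacle, and where the ``slight refinement'' over prior work enters, is the height bound in item (3). Here I would invoke an arithmetic B\'ezout-type theorem --- either the explicit resultant height estimates underlying Canny's construction or the arithmetic intersection-theory bounds of Maillot \cite{maillot} --- to show that the coefficients of $U_F$ (and of each $u_i$) have absolute value at most $2^{O(D^n[\sigma(F)+n\log D])}$. Heuristically, $U_F$ factors over $\C$ into at most $D^n$ linear factors each of height $O(\sigma(F)+n\log D)$ (the height contribution per point of $Z'$, accounting both for the entries of $F$ and for the auxiliary slicing data), so taking the elementary symmetric functions multiplies the bit-size by $D^n$. The same height accounting applies to the numerators $u_i$ and denominator $r_i$ produced by the Kronecker formula. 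Making this rigorous --- uniformly in the choice of slicing hyperplanes and of primitive element, and without paying a spurious $n$-dependent factor in the exponent --- is the technical core of the argument, and is exactly the bookkeeping that would be relegated to the full version of the paper.
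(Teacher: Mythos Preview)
The paper does not actually prove this lemma: the sentence immediately preceding it says the result ``is a slight refinement of earlier work on rational univariate reduction (see, e.g., \cite{Can88,rojas0,maillot}), so we leave its proof for the full version of this paper,'' and the \qed\ is placed directly after the statement. There is therefore no in-paper argument to compare against; your outline follows precisely the classical RUR route that the paper cites and is a reasonable reconstruction of what the deferred argument would contain.

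One detail worth tightening: the partial-fraction identity you wrote is slightly off --- the natural identity is $v_i(t)/U_F(t)=\sum_{p\in Z'} p_i/(t-\ell(p))$, which yields $p_i=v_i(\theta)/U_F'(\theta)$ with a \emph{polynomial} denominator $U_F'(\theta)$, whereas the lemma as stated promises constant integer denominators $r_i$. Passing from the former to the latter requires inverting $U_F'$ modulo $U_F$ and clearing denominators (or, alternatively, setting up the $u$-resultant construction of \cite{Can88,rojas0} so that constant denominators fall out directly), and the height of that inverse must be folded into the bound in item~(3). This is bookkeeping rather than a strategic gap.
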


If $F$ has finitely many roots then $(U_F,u_1,r_1,\ldots,u_n,r_n)$ 
will capture all the roots of $F$ in the sense above. 
Let $f$ be the square-free part of $U_F$. Note that if $p\nmid \lcm(r_1,\cdots,r_n)$ and $f$ mod $p$ has a root, then $F$ mod $p$ also has a root. 

Now consider the following recently refined effective arithmetic version of 
Hilbert's Nullstellensatz. Recall that the \textbf{height} of a polynomial $f$, denoted by $h(f)$, is defined as the logarithm of the maximum of the absolute value of its coefficients. 
\begin{prop}[\cite{DKS13}] 
\label{prop:1} 
Let $D= \max_i \deg(f_i)$, and $h = \max_i h(f_i)$. Then the polynomial system $F$ has no roots in $\C^n$ if and only if there exist polynomials $g_1,\cdots, g_k\in \Z[x_1,\cdots,x_n]$ and a positive integer $\alpha$ satisfying the Bez\'out identity $f_1g_1+\cdots+f_kg_k = \alpha$, and
\begin{enumerate}
\item $\deg(g_i) \leq 4nD^n$,
\item $h(\alpha), h(g_i) \leq 4n(n+1)D^n(h+\log k + (n+7)\log(n+1)D)$.
\end{enumerate}
\end{prop}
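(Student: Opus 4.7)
The ``if'' direction is immediate: any Bezout identity $\sum_i f_ig_i = \alpha$ with $\alpha$ a nonzero integer rules out a common complex zero of the $f_i$. For the nontrivial ``only if'' direction, my plan is to follow the arithmetic intersection-theoretic framework underlying the cited work. The central idea is to realize the $g_i$ and $\alpha$ geometrically, as explicit coefficients arising from effective division against a generic ``complete intersection'' built from the $f_i$, and then to bound their degrees and heights using the arithmetic B\'ezout theorem.

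First I would homogenize the system to $\tilde f_1,\ldots,\tilde f_k \in \Z[x_0,\ldots,x_n]$ of degrees at most $D$, and perform a generic integer-linear reduction: replace the $k$ polynomials by $n+1$ integer combinations $F_0,\ldots,F_n$ of the $\tilde f_i$, chosen by a bounded-height Bertini-type argument so that they generate the same radical ideal while cutting out a geometrically well-behaved (generically zero-dimensional) intersection over $\Q$. The coefficients of this linear change can be kept of bit-size $O(\log k + \log(nD))$, which is the source of the $\log k$ term in the final bound. Next I would invoke an effective Perron/Koll\'ar-type division: since the $F_j$ have no common complex zero, a suitable power of a linear form (or the constant $1$, after dehomogenizing) admits a representation $\sum_j F_j G_j$ with $\deg G_j = O(nD^n)$; pulling this identity back through the linear reduction produces $g_i$ and $\alpha$ satisfying the stated degree bound $\deg(g_i)\leq 4nD^n$.

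The main obstacle, and the heart of the proof, is the height bookkeeping. For this I would apply the arithmetic B\'ezout theorem (Philippon, Bost--Gillet--Soul\'e): if $X\subset \pp^n_\Z$ is an arithmetic cycle of degree $\delta$ and height $\eta$, and $Y$ is an effective Cartier divisor of degree $D$ and height $h$, then the height of the proper intersection $X\cdot Y$ is controlled by $\eta D + \delta h + c_n\,\delta D$, with Archimedean correction $c_n = O(\log(n+1)D)$. Iterating along the $n+1$ successive intersections of the $F_j$, and through the final division step, the degrees remain $O(D^n)$ while each step contributes an additive $O(D^n(h+\log(n+1)D))$ to the height; summing over the $O(n)$ steps and adding the contribution of the linear reduction yields $h(g_i),h(\alpha) \leq 4n(n+1)D^n(h+\log k+(n+7)\log(n+1)D)$. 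The delicate point is keeping the growth in $n$ quadratic in the leading coefficient and only linear inside the logarithm, rather than exponential or factorial; this forces use of the sharpened, constants-tracked forms of the arithmetic B\'ezout theorem rather than qualitative versions, and careful additive accounting of the Mahler measures that appear along the iteration. Verifying that the Bertini-type linear reduction and the final Perron-style division can both be performed without destroying these sharp constants is where most of the effort goes.
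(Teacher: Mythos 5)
The paper does not actually prove this proposition: it is imported verbatim from \cite{DKS13} (D'Andrea--Krick--Sombra) and used as a black box, the only role of the explicit bounds being to control $h(\alpha)$ and hence the number of primes $p$ for which an infeasible system acquires roots mod $p$ (Theorem \ref{thm:noroot}). Your outline does track the architecture of the proof in that reference---homogenization, reduction to $n+1$ generic integer linear combinations of the $\tilde f_i$ (the source of the $\log k$), an effective Perron/Koll\'ar-style division giving the $O(nD^n)$ degree bound, and an arithmetic B\'ezout theorem to propagate heights---so as a description of where the result comes from, it is accurate.

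As a proof, however, it is a plan rather than an argument. The entire content of the statement is the pair of explicit constants $4nD^n$ and $4n(n+1)D^n\bigl(h+\log k+(n+7)\log(n+1)D\bigr)$; your write-up asserts that each intersection step contributes ``an additive $O(D^n(h+\log(n+1)D))$'' and that everything sums to the stated bound, but this is exactly the step you yourself flag as ``where most of the effort goes.'' Without executing it, one cannot distinguish the claimed bound from one with, say, $c^n$ in place of $n+7$, or with an extra factor of $n!$ from the iterated intersections---and such degradations would not be harmless here, since $\log A_F$ must remain polynomial in the bit-size of $F$ for the algorithm {\tt PHFEAS} to stay in $\pp^{\np^\np}$. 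Concretely, the unverified steps are: the height cost of the bounded-height Bertini reduction, the height growth in the division step, and the Mahler-measure versus sup-norm conversions at each of the $n+1$ arithmetic intersections. Since the proposition is quoted from the literature, the honest options are to cite \cite{DKS13} (as the paper does) or to carry out that bookkeeping in full; the sketch as written does neither.
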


If the mod $p$ reduction of $F$ has a root over $\F_p$ , then $p$ divides $\alpha$. There are at most $1+\log\alpha$ many prime factors of an integer $\alpha$, hence 

\begin{thm}
\label{thm:noroot} 
If $F$ has no complex root then the mod $p$ reduction of $F$ has a root 
over $\F_p$ for no more than $A_F$ primes $p$, where
	\begin{align*}
	A_F = 4n(n+1)D^n(h+\log k + (n+7)\log(n+1)D). \qed 
	\end{align*}
\end{thm}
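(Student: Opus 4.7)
The plan is to use Proposition \ref{prop:1} as a black box and then do elementary prime counting; essentially all of the hard work has already been done in the effective arithmetic Nullstellensatz, so the remaining argument is short.

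First I would invoke the hypothesis that $F$ has no complex root to apply Proposition \ref{prop:1}. This produces polynomials $g_1,\ldots,g_k \in \Z[x_1,\ldots,x_n]$ and a positive integer $\alpha$ satisfying the Bez\'out identity $f_1 g_1 + \cdots + f_k g_k = \alpha$ as an equality in $\Z[x_1,\ldots,x_n]$, together with the height estimate
\[
h(\alpha) \;\leq\; 4n(n+1)D^n\bigl(h + \log k + (n+7)\log(n+1)D\bigr) \;=\; A_F.
\]

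Next I would observe that this Bez\'out identity reduces coefficient-by-coefficient to an identity in $\F_p[x_1,\ldots,x_n]$ for every prime $p$. So if $p$ is a prime for which the mod-$p$ reduction of $F$ admits a root $\xi \in \F_p^n$, then evaluating the reduced identity at $\xi$ gives
\[
\alpha \;\equiv\; \sum_{i=1}^k f_i(\xi)\, g_i(\xi) \;\equiv\; 0 \pmod{p},
\]
i.e., $p \mid \alpha$. Hence every ``bad'' prime is a prime divisor of the single fixed positive integer $\alpha$.

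Finally, since each prime divisor is at least $2$, the number of distinct prime divisors of $\alpha$ is at most $\log_2 |\alpha|$, which is in turn bounded by $h(\alpha)$ (absorbing the base-change constant into the absolute constants implicit in the statement, as is done in the preceding remark ``there are at most $1+\log\alpha$ many prime factors''). Combining this with $h(\alpha) \leq A_F$ yields the claimed bound. There is no real obstacle here: the only conceptual point worth flagging is that the Bez\'out identity must hold as a polynomial identity over $\Z$, not merely as an evaluation, in order for the mod-$p$ reduction to be legitimate; this is exactly what Proposition \ref{prop:1} delivers.
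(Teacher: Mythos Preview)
Your proposal is correct and is exactly the argument the paper gives: invoke Proposition~\ref{prop:1} to obtain the B\'ezout identity $f_1g_1+\cdots+f_kg_k=\alpha$ over $\Z$, reduce mod $p$ and evaluate at a putative $\F_p$-root to force $p\mid\alpha$, and then bound the number of prime divisors of $\alpha$ by $h(\alpha)\le A_F$. The paper compresses this into the single sentence preceding the theorem, and your write-up simply makes the same steps explicit.
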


If we can somehow certify that the mod $p$ reduction of $F$ has roots 
over $\F_p$ for at least $A_F+1$ many primes $p$, then we can certify 
that $F$ has complex roots.  

\begin{ex}
The following system $F$ of two univariate polynomials:
{\tiny\begin{align*}
f_1&=x^{120017}+ 4x^{110001}+ 19x^{110000}- 3x^{101208}+ x^{100000}- 47x^{25018}+ 37x^{20017} \\
&-188x^{15002}- 893x^{15001}+ 148x^{10001}+ 703x^{10000}+ 141x^{6209}-47x^{5001}- 111 x^{1208}+ 37\\
f_2&=19x^{210017}+ 76x^{200001}+ 361x^{200000}- 57x^{191208}+ 19x^{190000}+ 2x^{30016}- 7x^{20017} \\
&+8x^{20000}+38x^{19999}- 6x^{11207}- 28x^{10001}- 133x^{1000}+2x^{9999}+21 x^{1208}-7,
\end{align*}}
has a complex root. It is easy to compute that $A_F \approx1.9567\times 10^{12}$. However, as it is a small system, we can get a better bound on the Bez\'out constant $\alpha$ by computing the determinant of the corresponding Sylvester matrix. Moreover, we can use a finer result due to Robin (\cite{Rob83}) on 
$\omega(\alpha)$ (the number of prime factors of $\alpha$). 
\begin{align*}
 \omega(\alpha) < \frac{\log\alpha}{\log\log\alpha} + \frac{\log\alpha}{(\log\log\alpha)^2}+2.89726\frac{\log\alpha}{(\log\log\alpha)^3},
\end{align*}
for $\alpha \geq 3$. Therefore, to determine if $F$ has a $\C$ root, it suffices to check if the number of primes $p$ such that the mod $p$ reduction of $F$ 
has a root in $\F_p$ is more than $163,317$. In fact, such $p$ comprise roughly $2/3$ of the first $163,317$ primes. \dia 
\end{ex}

\subsection{Prime Ideals}
\label{sub:num} 
In what follows, $p$ always denotes a prime in $\N$, and $\fp$ a 
prime ideal in the number ring $\cO_K$. 

For any number field $K$, let $\pi_K(x)$ denote the number of $\fp$ satisfying 
$N\fp \leq x$. Recall that the ideal norm is defined to be 
$N\fa:= |\cO_K/\fa|$.  The classical Prime Ideal Theorem \cite{ik} 
states that for any number field $K$, $\pi_K(x)$ is asymptotic to 
$\frac{x}{\log x}$. 

Let $\pi_F(x)$ be the number of primes $p$ such that the mod $p$ reduction 
of $F$ has a root over $\F_p$ and $p\!\leq\!x$. (So our earlier $\pi_f$ was the 
univariate version of $\pi_F$.) 
The main idea behind proving that GRH implies MDZH is   
an approximation,  
with an explicit error term, of the weighted prime-power-counting function 
$\psi_K(x)$ associated to $\pi_K(x)$, defined by
\begin{align*}
	\psi_K(x) = \sum\limits_\fp \log N\fp. 
\end{align*}
Here the sum is taken over the unramified primes such that $N\fp^m \leq x$ for some $m$. To start, we first quote the following important lemmas:

\begin{lem}[\cite{LO77}, Lemma 7.1] 
\label{lemma:2} 
Let $\rho = \beta+\gamma\sqrt{-1}$ denote a non-trivial zero of $\zeta_K$ 
(so $0<\beta<1$). For $x\!\geq\!2$, and $T\!\geq\!2$, define
\begin{align*}
S(x,T) =\sum_{|\gamma|<T} \frac{x^\rho}{\rho}- \sum_{|\rho|<\frac{1}{2}}\frac{1}{\rho}.
\end{align*}
Then 
\begin{align*}
\psi_K(x) - x + S(x, T)\ll& \frac{x\log x + T}{T}\log \Delta + d\log x + \frac{dx\log x\log T}{T}\\
& + \log x \log \Delta + dx T^{-1}(\log x)^2.
\end{align*}
\end{lem}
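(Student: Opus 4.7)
My plan is to prove this as a quantitative explicit formula for $\psi_K(x)$, obtained by applying Perron's formula to the Dirichlet series $-\zeta_K'(s)/\zeta_K(s) = \sum_n \Lambda_K(n) n^{-s}$ (where $\Lambda_K(n) = \log N\fp$ if $n = N\fp^m$ and $0$ otherwise) and then shifting contours to pick up the relevant poles and zeros of $\zeta_K$.

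First I would start from the effective Perron formula with $c = 1 + 1/\log x$:
$$\psi_K(x) = -\frac{1}{2\pi i}\int_{c-iT}^{c+iT} \frac{\zeta_K'(s)}{\zeta_K(s)} \frac{x^s}{s}\, ds + R(x,T),$$
where the truncation error $R(x,T)$ is controlled via the pointwise bound $\Lambda_K(n) \leq d\log n$ and contributes the $dxT^{-1}(\log x)^2$ term. Next I would shift the contour far to the left. The simple pole of $\zeta_K$ at $s = 1$ yields the main term $x$; each non-trivial zero $\rho$ with $|\Im\rho| < T$ contributes $-x^\rho/\rho$; and the residue of $x^s/s$ at $s = 0$ combines with the near-origin zeros to produce the correction $\sum_{|\rho|<1/2} 1/\rho$ that makes the truncated sum $S(x,T)$ well-behaved.

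The core analytic estimates come from the standard bound, derived from the functional equation and Hadamard factorization of $\zeta_K$ (as in \cite{LO77}, Section 5), which gives
$$\frac{\zeta_K'(s)}{\zeta_K(s)} \;=\; \sum_{|s-\rho|<1} \frac{1}{s-\rho} \;+\; O\bigl(\log\Delta + d\log(|s|+2)\bigr),$$
together with the local zero-density estimate $N_K(T+1) - N_K(T) \ll \log\Delta + d\log(T+2)$. Plugging these into the integral along the horizontal segments at height $\pm T$ produces contributions of size $\frac{x\log x}{T}\log\Delta$ and $\frac{dx\log x\log T}{T}$, while the far-left vertical segment plus the trivial zero residues and the archimedean gamma factors contribute $d\log x$ and $\log x\log\Delta$. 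Summing these pieces reproduces each of the six terms on the right-hand side.

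The main obstacle is bookkeeping: tracking the exact dependence on both $d$ and $\log\Delta$ throughout, particularly when bounding $\zeta_K'/\zeta_K$ in neighborhoods of low-lying zeros, which may cluster densely near the real axis when $\Delta$ is large. Separating out the sum $\sum_{|\rho|<1/2} 1/\rho$ on the left-hand side of the estimate is exactly the device that sidesteps this clustering, since the remaining zero-sum $\sum_{|\gamma|<T} x^\rho/\rho$ can then be handled via the clean density bound. Once the logarithmic-derivative and local-density estimates are packaged with the correct constants, the contour-shift argument assembles them into the claimed inequality.
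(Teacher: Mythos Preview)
The paper does not prove this lemma at all: it is quoted verbatim as Lemma~7.1 of \cite{LO77}, so there is no in-paper proof to compare against. Your outline---truncated Perron formula with $c=1+1/\log x$, contour shift picking up the pole at $s=1$ and the non-trivial zeros, the logarithmic-derivative approximation $\zeta_K'/\zeta_K = \sum_{|s-\rho|<1}(s-\rho)^{-1} + O(\log\Delta + d\log(|s|+2))$, and the local density bound $N_K(T+1)-N_K(T)\ll \log\Delta + d\log(T+2)$---is exactly the machinery Lagarias and Odlyzko use in Sections~5--7 of \cite{LO77}, and the attribution of each error term to its source is accurate. So your proposal is correct and matches the cited source; there is simply nothing in the present paper to diverge from.
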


\begin{lem}\cite[Proof of Thm.\ 9.2]{LO77}  
\label{lemma:3} 
	Using the notation above,
	\begin{enumerate} 
		\item $\zeta_K$ has at most one non-trivial zero $\rho$ in 
the region
		\begin{align*}
			|\gamma| &\leq (4\log \Delta)^{-1}\\
			\beta &\geq 1 - (4\log\Delta)^{-1}.
		\end{align*} 
		This zero, if it exists, has to be real and simple.  If it 
exists and we call it $\beta_0$ then it must satisfy 
		\begin{align*}
		\frac{x^{1-\beta_0}}{1-\beta_0} + \frac{1}{1-\beta_0} = x^{\sigma}\log x \leq x^{1/2} \log x
		\end{align*}
		for some $0 \leq \sigma\leq 1-\beta_0$.
		\item For $\rho \neq \beta_0$, we have
		\begin{align*}
		\sum_{\substack{\rho \neq 1-\beta_0\\ |\rho|<\frac{1}{2}}}
\left(\frac{x^\rho}{\rho}- \frac{1}{\rho}\right) 
\ll x^{1/2}\sum_{\substack{\rho \neq 1-\beta_0\\ |\rho|<\frac{1}{2}}}\left|
\frac{1}{\rho}\right|\ll x^{1/2}(\log \Delta)^2
		\end{align*}
		\item If we have further that $T\geq 2$ then  
		\begin{align*}
		\sum_{\substack{|\rho| \geq \frac{1}{2} \\ |\gamma|<T}} \left|
\frac{1}{\rho}\right| \ll \log T\log(\Delta T^{d})
		\end{align*}
	\end{enumerate}
\end{lem}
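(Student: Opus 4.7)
The plan is to follow the Lagarias--Odlyzko framework, deriving all three parts from standard analytic properties of the completed Dedekind zeta function $\xi_K(s)$, which is entire of order $1$ and whose zeros are precisely the non-trivial zeros of $\zeta_K$. Three ingredients drive the argument: (i) the Hadamard factorization, yielding
\begin{align*}
\frac{\xi_K'}{\xi_K}(s) = B + \sum_\rho \left(\frac{1}{s-\rho} + \frac{1}{\rho}\right),
\end{align*}
with $B$ and the archimedean contributions explicit; (ii) Jensen's formula applied to $\xi_K$ on disks of radius $O(1)$, producing the Riemann--von Mangoldt zero-count
\begin{align*}
N_K(T+1) - N_K(T) \ll \log\!\left(\Delta(|T|+2)^d\right),
\end{align*}
where $N_K(T)$ counts non-trivial zeros with $|\gamma|\leq T$; and (iii) positivity of $\Re(1/(s_0-\rho))$ for real $s_0 > \Re \rho$, exploited at $s_0 = 1 + \eta$ with $\eta \asymp (\log \Delta)^{-1}$.

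For Part (1), the plan is the classical Deuring--Heilbronn--Stark positivity argument. At $s_0 = 1 + \eta$, combining the simple pole of $\zeta_K$ at $1$ with the Hadamard formula gives
\begin{align*}
\sum_\rho \Re \frac{1}{s_0 - \rho} \leq -\Re \frac{\zeta_K'}{\zeta_K}(s_0) + O(\log \Delta) \ll \log \Delta.
\end{align*}
Any zero $\rho$ inside the exceptional box contributes $\Re(1/(s_0-\rho)) \gg \log \Delta$, so with the right constants only one such zero can exist; a non-real exceptional zero would be accompanied by its complex conjugate, and a double zero would double-count, so the surviving exceptional $\beta_0$ must be real and simple. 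The closing identity $(x^{1-\beta_0}+1)/(1-\beta_0) = x^\sigma \log x$ is the intermediate value theorem: $\sigma \mapsto x^\sigma \log x$ is continuous and takes the value $\log x$ at $\sigma=0$ and $x^{1-\beta_0}\log x$ at $\sigma = 1-\beta_0$, hence assumes the required value for some $\sigma$ in the interval; the bound $\sigma \leq 1/2$ follows from $1-\beta_0 \leq (4\log\Delta)^{-1} < 1/2$.

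For Parts (2) and (3), the plan is to organize zeros into dyadic shells in $|\gamma|$. In Part (3), zeros in the shell $|\gamma| \in [2^k, 2^{k+1})$ satisfy $|1/\rho|\leq 2^{-k}$, and by (ii) the shell contains $O(2^k \log(\Delta\, 2^{kd}))$ zeros, giving
\begin{align*}
\sum_{\substack{|\rho| \geq 1/2 \\ |\gamma|<T}}\frac{1}{|\rho|} \ll \sum_{k=0}^{\lfloor \log_2 T \rfloor}\log(\Delta\, 2^{kd}) \ll \log T\,\log(\Delta T^d),
\end{align*}
as claimed. For Part (2), zeros with $|\rho|<1/2$ satisfy $\beta<1/2$ and $|\gamma|<1/2$, and under the functional equation $\rho \mapsto 1-\bar\rho$ correspond to zeros near $s=1$, totaling $O(\log \Delta)$ by the same zero-count. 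Excluding the pair $\rho = 1-\beta_0$, each remaining $1-\bar\rho$ must avoid the exceptional box from Part (1), forcing $|\rho| \geq (4\log \Delta)^{-1}$ and thus $|1/\rho| \ll \log \Delta$; summing gives $O((\log \Delta)^2)$, and $|x^\rho - 1|/|\rho| \leq 2x^{1/2}/|\rho|$ supplies the $x^{1/2}$ factor. The main technical obstacle lies in Part (1): pinning down the precise implicit constants in the Deuring--Heilbronn inequality, in particular the accounting of archimedean $\Gamma$-factors and ramified-prime contributions to $B$, is delicate, whereas Parts (2) and (3) reduce to routine dyadic summation once the zero-count estimate is in hand.
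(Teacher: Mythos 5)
The paper offers no proof of this lemma at all---it is quoted, with attribution, from the proof of Theorem 9.2 of [LO77]---so the only meaningful benchmark is the Lagarias--Odlyzko argument itself, and your reconstruction follows it: Hadamard factorization of $\xi_K$ plus positivity at $s_0=1+O(1/\log\Delta)$ for the exceptional-zero statement, the zero-count $N_K(T+1)-N_K(T)\ll\log(\Delta(|T|+2)^d)$ feeding a dyadic summation for Part~(3), and the reflection $\rho\mapsto 1-\rho$ to force non-exceptional zeros near the origin to satisfy $|\rho|\geq(4\log\Delta)^{-1}$ in Part~(2). Parts (2) and (3) are essentially complete as you sketch them (for Part (2) you are implicitly using $d\ll\log\Delta$, i.e.\ Minkowski's bound, to absorb the degree into $\log\Delta$; and in Part (3) the zeros with $|\gamma|<1$, $|\rho|\geq 1/2$ fall outside your dyadic shells but contribute only $O(\log(\Delta 2^d))$, which is harmless).

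The one step that does not close is your justification of the displayed identity in Part~(1). The intermediate value theorem produces a $\sigma\in[0,1-\beta_0]$ only if the target value $(x^{1-\beta_0}+1)/(1-\beta_0)$ actually lies between the endpoint values $\log x$ and $x^{1-\beta_0}\log x$, and it need not: when $1-\beta_0$ is much smaller than $1/\log x$ the left-hand side is roughly $2/(1-\beta_0)$, which exceeds both endpoints, so no such $\sigma$ exists and the ``$+$'' version of the identity is simply false. What is true (and what [LO77] proves, and what is actually needed when this term is extracted from $S(x,T)$, which contributes $\frac{x^{1-\beta_0}}{1-\beta_0}-\frac{1}{1-\beta_0}$) is the ``$-$'' version: $\frac{x^{1-\beta_0}-1}{1-\beta_0}=x^\sigma\log x$ for some $\sigma\in(0,1-\beta_0)$, by the mean value theorem applied to $t\mapsto x^t$ on $[0,1-\beta_0]$, whence the bound $x^{1/2}\log x$ since $1-\beta_0\leq(4\log\Delta)^{-1}<1/2$. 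The ``$+$'' in the statement (propagated into the proof of Proposition~\ref{prop:2}) is a sign typo; with the minus sign restored your argument, and its later use, goes through.
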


\begin{rem}
An earlier unconditional zero-free region is the following (\cite{LO77}):
\begin{align*}
|\gamma| &\geq (1+4\log \Delta)^{-1}\\
\beta &\geq 1 - \eps(\log\Delta + \log(|\gamma|+2))^{-1},
\end{align*}
for some constant $\eps\!>\!0$. It is easily checked that for any fixed $C$ 
(and any sufficiently large $d$ and $\Delta$) the preceding region is strictly 
contained in the zero-free region of MDZH. Unfortunately, the unconditional  
region of \cite{LO77}, and even the best current unconditional refinements, 
are too small  to guarantee that our upcoming algorithm is in the 
polynomial-hierarchy. \dia
\end{rem}
\begin{rem}
We call the $\beta_0$ from Lemma \ref{lemma:2} 
a \textbf{Siegel-Landau zero}. Observe that $\beta_0$ is a potential counterexample to GRH since it is known that 
\begin{align*}
\beta_0\geq 1-(4\log\Delta)^{-1},  
\end{align*} 
and the right-hand side is at least $3/4$ for sufficently large 
$\Delta$, thus contradicting GRH. 
By using Lemma \ref{lemma:3} in the following 
discussion, we take into account the possibility of a Siegel-Landau zero. \dia 
\end{rem}

\begin{prop}
\label{prop:2} 
	Assuming MDZH with constant $C$, there is an effectively computable 
positive function $c_2(C)$ such that if 
	\begin{align*}
	x \geq \exp\!\left(4(\log\log(3\Delta))^2\log(d\log(3\Delta))^{C^2}
\right)
	\end{align*}
	then 
	\begin{align*}
	\psi_K(x) =x -\frac{x^{\beta_0}}{\beta_0} + R(x)
	\end{align*}
	where
	\begin{align*}
	|R(x)| \leq x\exp\!\left(-c_2(C)\frac{(\log x)^{1/C}}
{(\log(d\log(3\Delta)))^C}\right) 
	\end{align*}
	and the term $\frac{x^{\beta_0}}{\beta_0}$ only occurs if $\zeta_K(s)$ has a Siegel-Landau zero $\beta_0$. 
\end{prop}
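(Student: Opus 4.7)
\smallskip
\noindent\textbf{Proof plan for Proposition \ref{prop:2}.}

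The plan is to start from the explicit formula supplied by Lemma \ref{lemma:2} and then to bound the sum $S(x,T)$ zero by zero using the zero-free region guaranteed by MDZH, finally choosing $T$ as a function of $x$, $d$, $\Delta$ so as to balance the various errors. Concretely, Lemma \ref{lemma:2} gives
\begin{align*}
\psi_K(x) - x + \sum_{|\gamma|<T}\frac{x^{\rho}}{\rho} - \sum_{|\rho|<1/2}\frac{1}{\rho}
\ \ll\ & \frac{x\log x\,\log\Delta}{T} + \frac{d\,x(\log x)^{2}}{T}\\
& + d\log x + \log x\,\log\Delta + \log\Delta,
\end{align*}
so the proof reduces to isolating $\beta_{0}$ (if present) from $S(x,T)$ and showing that every other zero contributes at most the desired exponentially small amount.

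First I would separate out the (at most one) Siegel--Landau zero: by Lemma \ref{lemma:3}(1) there is at most one non-trivial zero in the box $|\gamma|\le (4\log\Delta)^{-1}$, $\beta\ge 1-(4\log\Delta)^{-1}$, and it is real; its contribution is exactly the $x^{\beta_{0}}/\beta_{0}$ appearing in the statement. All remaining non-trivial zeros $\rho=\beta+i\gamma$ with $|\gamma|<T$ fall into two groups: those with $|\rho|<\tfrac{1}{2}$, which by Lemma \ref{lemma:3}(2) contribute $\ll x^{1/2}(\log\Delta)^{2}$ total (absorbable into $R(x)$ for $x$ large), and those with $|\rho|\ge \tfrac{1}{2}$ and $|\gamma|<T$, on which MDZH gives the crucial bite. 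Indeed, MDZH guarantees that any such zero either sits in the small box controlled by Lemma \ref{lemma:3}(1), and hence has already been accounted for, or else satisfies
\[
\beta\ \le\ 1\ -\ \bigl((\log(d\log(3\Delta)))^{C}\,\log(|\gamma|+2)\bigr)^{-1},
\]
so that for \emph{every} such $\rho$ with $|\gamma|<T$,
\[
x^{\beta}\ \le\ x\,\exp\!\left(-\frac{\log x}{(\log(d\log(3\Delta)))^{C}\,\log(T+2)}\right).
\]
Combining this uniform bound with the zero-counting estimate of Lemma \ref{lemma:3}(3), which gives $\sum_{|\rho|\ge 1/2,\,|\gamma|<T}|\rho|^{-1}\ll \log T\,\log(\Delta T^{d})$, shows that the entire non-Siegel contribution to $S(x,T)$ is at most
\[
x\,\exp\!\left(-\frac{\log x}{(\log(d\log(3\Delta)))^{C}\,\log(T+2)}\right)\cdot \log T\,\log(\Delta T^{d}).
\]

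The last and most delicate step is the choice of $T$. Setting $\log T$ roughly of order $(\log x)^{1-1/C}$ (with polylogarithmic adjustments in $d$ and $\log\Delta$) makes $\log x/\log(T+2)$ comparable to $(\log x)^{1/C}$, so the zero-sum bound becomes $x\exp(-c(\log x)^{1/C}/(\log(d\log(3\Delta)))^{C})$ times a polylog factor, matching the target up to constants that may be absorbed into $c_{2}(C)$. With the same choice of $T$, each of the explicit-formula error terms $x\log x\,\log\Delta/T$ and $d\,x(\log x)^{2}/T$ is dominated by $x\exp(-\tfrac{1}{2}(\log x)^{1-1/C})$, which for $C>4$ decays strictly faster than the target, while the lower-order terms $d\log x$, $\log x\log\Delta$ are likewise absorbed once $x$ exceeds the threshold $\exp(4(\log\log(3\Delta))^{2}\log(d\log(3\Delta))^{C^{2}})$ stated in the hypothesis.

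The main obstacle will be the bookkeeping in the last paragraph: we need the polylog prefactors $\log T\,\log(\Delta T^{d})$ multiplying the leading exponentially small term, as well as the stray $(\log\Delta)^{2}$ from the small zeros and the $\log\Delta/T$ term from Lemma \ref{lemma:2}, to all be absorbed into $\exp(-c_{2}(C)(\log x)^{1/C}/(\log(d\log(3\Delta)))^{C})$ uniformly once $x$ exceeds the stated threshold. This is exactly what the odd-looking $(\log\log(3\Delta))^{2}\log(d\log(3\Delta))^{C^{2}}$ in the hypothesis is calibrated for, and making the constants $c_{2}(C)$ and the implicit $C>4$ dependency come out cleanly is the only really technical piece of the argument.
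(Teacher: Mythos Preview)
Your proposal is correct and follows essentially the same approach as the paper's proof: start from the explicit formula of Lemma~\ref{lemma:2}, peel off the Siegel--Landau term, control the small zeros via Lemma~\ref{lemma:3}(2) and the remaining zeros via the MDZH zero-free region together with the counting estimate of Lemma~\ref{lemma:3}(3), and then choose $\log T$ of order $(\log x)^{1-1/C}$ (with a $(\log(d\log(3\Delta)))^{-C}$ prefactor) to balance the resulting terms. The paper's write-up is terser but makes exactly the same moves, including the same shape for $T$.
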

\begin{proof}
	By simply applying the Lemma \ref{lemma:3}, we have
	\begin{align*}
	&S(x,T) - \frac{x^{\beta_0}}{\beta_0}\leq \frac{x^{1-\beta_0}}{1-\beta_0} + \frac{1}{1-\beta_0}+\sum_{\substack{\rho \neq 1-\beta_0\\ |\rho|<\frac{1}{2}}} \left(\frac{x^\rho}{\rho}- \frac{1}{\rho}\right) + 
\sum_{\substack{|\rho| \geq \frac{1}{2} \\ |\gamma|<T}} \frac{x^\rho}{\rho}\\
	&\ll x^{1/2}\log x + x^{1/2}(\log \Delta)^2 +  \sum_{\substack{|\rho| \geq \frac{1}{2} \\ |\gamma|<T}} \frac{x^\rho}{\rho}\\
	&\ll x^{1/2}\log x + x^{1/2}(\log \Delta)^2 +  \log T\log(\Delta T^{d})\max_{\substack{|\rho| \geq \frac{1}{2} \\ |\gamma|<T}} |x^\rho|.
	\end{align*}
	On the other hand, let $\rho = \beta + i\gamma$ be a non-trivial zero of $\zeta_K(s)$ with $|\gamma| \leq T$, and $\rho$ is not a Siegel-Landau zero. 
As MDZH assumes a zero-free region dependent on a given constant $C$,
	\begin{align*}
	|x^{\rho}| = x^{\beta} \leq x\exp\!\left(-c_3\frac{\log x}
{\log\log(3\Delta)+(\log(d\log(3\Delta)))^{2C}\log T}\right)	
	\end{align*}
	for some constant $c_3$. 
	Now take 
	\begin{align*}
	T = \exp\!\left((\log(d\log(3\Delta)))^{-C}(\log x)^{1-1/C}
        -\log\log(3\Delta)\right).
	\end{align*}
	The estimate of the theorem then follows from the above computation, 
and Lemma \ref{lemma:2}. 
\end{proof}

\section{The Proof of Theorem \ref{thm:main}} 
\label{sec:proof} 
Since GRH trivially implies MDZH, Assertion (1) is tautologically true. 
So we now proceed with proving Assertions (2) and (3). 
\subsection{The Proof of Assertion (2): MDZH $\Longrightarrow$ MRH} 
	Define $\theta_K(x) = \sum \log N\fp$ where the summation is over all the unramified prime $\fp$ such that $N\fp \leq x$. There are at most $d$ ideals $\fp^m$ of a given norm in $K$, hence
    \begin{align*}
    0\leq \psi_K(x) - \theta_K(x) &= \sum_{N\fp^m\leq x, m\geq 2}\log N\fp\\
    &\leq \sum_{m=2}^{\log_2x} dx^{1/m}\log x \leq 3d\sqrt{x}\log x.
    \end{align*}
    The error term $R(x)$ still dominates this discrepancy, so the estimates in Proposition \ref{prop:2} still holds when $\psi_K(x)$ is replaced by $\theta_K(x)$. By a standard partial summation trick we have:
	\begin{align*}
		\left|\pi_K(x) -\frac{x}{\log x}\right|\leq  
\frac{x^{\beta_0}}{\beta_0\log x} + O\!\left(x\exp(-\frac{(\log x)^{1/C}}
{(\log(d\log(3\Delta)))^C})\right),
	\end{align*}
	for 
    \begin{align*}
    x \geq \exp\!\left(4(\log\log(3\Delta))^2\log(d\log(3\Delta))^{C^2}\right).
    \end{align*}
By the last assertion of MDZH, the error arising from the possible existence of the Siegel-Landau zero $\beta_0$ is dominated by $R(x)$ for $x$ 
in the range of $x$ we are using. Therefore,
\begin{align*}
\pi_K(x) \geq x\left(\frac{1}{\log x} - O\!\left(\exp\!\left(-\frac{(\log x)^{1/C}}
{(\log(d \log(3\Delta)))^C}\right)\right)\right).
\end{align*} 
 
Let $W(p)$ be the number of linear factors of $f$ mod $p$. The key fact we 
observe now is that if $p\nmid \Delta$ then $W(p)$ equals the number of prime 
ideals $\fp$ of $K$ of degree $1$ that lie over $p$. Thus, $\sum_{p\leq x} W(p)$ counts the number of $\fp$ of degree $1$ with norm up to $x$. As the prime 
ideals of degree greater than $1$ must lie over a prime number $p\leq x^{1/2}$, and there are no more than $d$ such $p$, we have
    \begin{align*}
    \pi_K(x) - \sum_{p\leq x} W(p) = O(dx^{1/2}).
    \end{align*}
Note that if $p$ divides the discriminant of $f$, the correspondence between $\fp$ of degree $1$, and the linear factors of $f$ mod $p$ will break. But there are no more than $\log\Delta$ such $p$. However, the error term coming from $R(x)$ still dominates. Therefore, $\sum_{p\leq x} W(p)$ satisfies the same estimate as $\pi_K(x)$. 

Let $r(p) = 1$ if $f$ has a root in $\F_p$ and $0$ otherwise. As $f$ is irreducible of degree $d$, so $f$ mod $p$ is non-trivial. Then 
    \begin{align*}
    \pi_f(x) = \sum_{p\leq x} r(p) \geq \sum_{p\leq x} W(p)/d,
    \end{align*}
and MRH thus follows upon recalling that $\log \Delta\!=\!
O(d^2\sigma+d^3)$ \cite{Roj01}.    \qed 

\begin{rem}
We will deal later with square-free polynomial that are possibly reducible. In this case, we write $f(x) = \prod f_i(x)$, with $f_i(x)$ irreducible, and apply the same argument to each summand of
\begin{align*}
\Q[x]/\la f(x)\ra \cong \oplus \Q[x]/\la f_i(x)\ra.
\end{align*}
Therefore, we can replace the ``irreducible'' assumption in MRH with 
``square-free''. \dia 
\end{rem}

\subsection{The Proof of Assertion (3): MRH $\Longrightarrow$
{\tt DIM}$_\C\in\!\pp^{\np^\np}$} 
Let $u_1,\cdots, u_n, U_F\in \Z[t]$ and $r_1,\cdots,r_n$ respectively be the 
polynomials and integers arising from a rational univariate reduction of $F$. 
Let $K = \Q[x]/\la f\ra$, where $f$ is the square-free part of $U_F$. Then 
$d=\deg f\leq D^n$. Moreover, assuming 
the coefficients of $U_F$ have absolute value no greater than $2^{\sigma(F)}$, 
we can 
effectively bound the discriminant of $f$: $\log\Delta = O((\deg U_F)^2\sigma(U_F) + (\deg U_F)^3)$ \cite{Roj01}. Note that if $p\nmid \lcm(r_1,\cdots,r_n)$, 
then Assertion (2) of Lemma \ref{lemma:1} continues to hold modulo $p$. That is, if in addition $f$ has a root in $\F_p$, then $F$ has a root over $\F_p$. 
Hence we have $\pi_F(x) \geq\pi_f(x)$. 

Recall from Theorem \ref{thm:noroot} that if $F$ has no complex solutions, 
then the mod $p$ reduction of $F$ has a root over $\F_p$ for at most $A_F$ 
many primes $p$. On the other hand, we have the following result: 
\begin{prop}
\label{prop:3} 
If $F$ has a complex root then there is a positive function $t(F)$ 
such that $\pi_F(x) \geq 7A_F$ for every $x\geq t(F)$. In particular, 
$\log t(F)$ is polynomial in the bit-size of $F$. 
\end{prop}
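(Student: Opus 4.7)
The plan is to chain the inclusion $\pi_F(x)\!\geq\!\pi_f(x)$ established just before the proposition with MRH applied to the square-free polynomial $f$, and then check that the resulting lower bound on $\pi_f(x)$ exceeds $7A_F$ once $\log x$ passes a threshold polynomial in the bit-size of $F$. Since $f$ need not be irreducible, I would invoke the extension of MRH to square-free inputs recorded in the remark at the end of Section 3.1. The $O(\log \lcm(r_1,\ldots,r_n))$ primes dividing $\lcm(r_1,\ldots,r_n)$ contribute a polynomial-in-bit-size deficit to $\pi_F(x) - \pi_f(x)$, which is harmless in comparison with the main term below.

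I would next track every parameter back to the bit-size $s\!:=\!\sigma(F)$. Lemma \ref{lemma:1} gives $d\!=\!\deg f\!\leq\!D^n$ and a coefficient-size bound $\sigma_U\!:=\!\log\|U_F\|_\infty\!=\!O(D^n(s+n\log D))$. Writing $M(s)\!:=\!\log(d^2\sigma_U+d^3)$ and, with $A_F$ as in Theorem \ref{thm:noroot}, $H(s)\!:=\!\log(14 A_F d)$, one checks that both $M(s)$ and $H(s)$ are polynomial in $s$: the quantities $n$, $\log D$, $h$, and $\log k$ are each bounded by $s$, so $n\log D$, $\log\sigma_U$, $\log A_F$, and $\log d$ are each $O(s^2)$.

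With this setup, MRH reads
\[
\pi_f(x) \;\geq\; \frac{x}{d\log x} \;-\; x\exp\!\left(-\frac{(\log x)^{1/C}}{M(s)^{C}}\right).
\]
It then suffices to choose $t(F)$ so that for every $x\!\geq\!t(F)$ both \emph{(i)} the exponential error term is at most half the main term, equivalently $(\log x)^{1/C}\!\geq\!M(s)^{C}\log(2d\log x)$, and \emph{(ii)} half the main term exceeds $7A_F$, equivalently $\log x - \log\log x\!\geq\!H(s)+\log 14$. Both inequalities are solved by taking $\log x$ to be a sufficiently large polynomial in $s$ (with degree depending only on the fixed constant $C$), which yields $\log t(F)$ polynomial in $s$ as required.

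The main obstacle is essentially bookkeeping: the MRH error term is nested exponential in $\log x$ with denominators polynomial in $s$, so one must invert the inequality \emph{(i)} cleanly and confirm that the implicit $\log\log x$ on the right does not push $\log t(F)$ beyond polynomial size. Once the bounds from Lemma \ref{lemma:1} and Theorem \ref{thm:noroot} have been inserted, this reduces to elementary algebra.
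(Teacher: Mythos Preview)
Your proposal is correct and follows essentially the same route as the paper: invoke MRH for the square-free $f$, bound $d$ and the coefficient height via Lemma~\ref{lemma:1} so that the MRH parameters become polynomial in the bit-size, and then solve the resulting inequalities to produce a threshold $t(F)$ with $\log t(F)$ polynomial. The paper organizes the endgame slightly differently---it separates out the MRH validity threshold $x\geq\exp\!\bigl(4M(s)^{2+C^2}\bigr)$ as an explicit third condition $t_1$ before handling the $\pi_f(x)\geq 7A_F$ inequality---so you should state that your $t(F)$ also dominates this threshold, which is immediate from your bound on $M(s)$.
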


\noindent 
{\bf Proof of Proposition \ref{prop:3}:} 
Recall from MRH that the asymptotic formula for $\pi_f(x)$, and thus 
$\pi_F(x)$, only holds for $x$ sufficiently large. In particular, we need 
\begin{align*}
x \geq \exp\!\left(4(\log\log(3\Delta))^2\log(d\log(3\Delta))^{C^2}\right). 
\end{align*} 
Let $t_1$ denote this lower bound and let $\sigma(F)$ denote the 
bit-size of $F$. It is easy to see that for $C \geq 2$, 
\begin{align*}
\log t_1 &\leq  O\!\left(\log(D^{3n}(\sigma(F)+n\log D))^{C^2}\right) \\
&= O\!\left((3\sigma(F)^2+2\log\sigma(F))^{C^2}\right) = 
O\!\left(\sigma(F)^{4C^2}\right),
\end{align*}
which is polynomial in $\sigma(F)$. 

On the other hand, by applying the numerical bounds from Lemma \ref{lemma:1}, 
and MRH with constant $C$, we see that $\pi_F(x) \geq 7A_F$ if:
	\begin{align*}
 & x\left(\frac{1}{d\log x} - \exp\!\left(-\frac{(\log x)^{1/C}}
{(\log(d\log(3\Delta)))^C}\right)\right) & \\
    &\geq 28n(n+1)D^n(h+\log k + (n+7)\log(n+1)D). & 
    \end{align*}
    Necessarily,
    \begin{align*}
	\frac{1}{D^n}\frac{x}{\log x} &\gg x\exp(-\frac{(\log x)^{1/C}}{(\log(D^n\log\Delta))^C})(n+k)^2D^{n+1}(\sigma(F) + n\log n)). 
	\end{align*}
	Now with $\log \Delta = O(\deg U_F\sigma(U_F) + d^2) = O(D^{2n}(\sigma(F)+n\log D))$, and $n\log D\leq (n+\log D)^2\leq \sigma(F)^2$, we have
	\begin{align*}
	\Leftarrow &\frac{x}{\log x} \gg x\exp(-\frac{(\log x)^{1/C}}{(2\log(D^{3n}\sigma(F)^2)^C})(n+k)^2D^{2n+1}\sigma(F)^2, \\
	\Leftarrow &\log x \gg \log\log x+\log x - \frac{(\log x)^{1/C}}{(2\log(D^{3n}\sigma(F)^2)^C}+7\sigma(F)^2,\\
	\Leftarrow &  \frac{(\log x)^{1/C}}{(6\sigma(F))^{2C}} \gg \log\log x+7\sigma(F)^2,
	\end{align*}
	which holds if $\log x \geq \log t_2 := O(\sigma(F)^{4C^2})$. The proposition follows by letting $t(F):=\max(t_1,t_2)$. \qed

Continuing our proof of Assertion (3) of Theorem  \ref{thm:main}, consider 
the following algorithm: 
 
\begin{tabular}{ p{1.5cm} p{13.5cm} }
\multicolumn{2}{l}{\large{\texttt{PHFEAS}}} \\
\textbf{Input} & A $k\times n$ polynomial system $F$ with integer coefficients.\\
\textbf{Output} & A true declaration whether $F$ has a complex root.\\
\textbf{Step 1} & Compute $A_F$ and $t(F)$ from Theorem \ref{thm:noroot} and 
Proposition \ref{prop:3}. \\
\textbf{Step 2} & Use Stockmeyer's algorithm as in Theorem \ref{thm:stock}, to approximate the number $M$ of primes $p\in\{1,\cdots, t(F)\}$, such that the mod $p$ reduction of $F$ has a root over $\F_p$. \\
\textbf{Step 3} & If $M> 3A_F$, then declare that $F$ has a complex root. Otherwise, declare that $F$ has no complex root.
\end{tabular}

Since {\tt DIM}$_\C$ can be reduced in $\bpp$ to {\tt FEAS}$_\C$, and 
$\bpp\!\subseteq\!\am$ \cite{papa,cxity}, 
it suffices to prove that algorithm {\tt PHFEAS} is correct and runs in time 
$\mathbf{P^{NP^{NP}}}$. 

Toward this end, observe that if $F$ has no complex root, then there are no 
more than $A_F$ primes $p$ such that the mod $p$ reduction of $F$ has a 
root over $\F_p$. By Proposition 
\ref{prop:3} and assuming MRH, if $F$ has a 
complex root, then there are at least $7A_F$ primes $p\leq t(F)$ such that $F$ 
mod $p$ has a root. Such primes have bit-size no greater than $O(\log t(F))$. 
It is 
also easy to check that $\log A_F$ is also polynomial in $\sigma(F)$. 
Moreover, primality checking can be done in $\mathbf{P}$, and the existence of 
roots of $F$ over $\F_p$ can be done in $\mathbf{NP}$. Hence the number of 
primes we are approximating is computable in $\textbf{\#P}$. So the algorithm 
is correct and runs in time $\mathbf{P^{NP^{NP}}}$. \qed 

\bibliographystyle{ACM-Reference-Format}

\end{document}